\newtheorem{theorem}{Theorem}
\newtheorem{lemma}[theorem]{Lemma}
\newtheorem{definition}[theorem]{Definition}
\newenvironment{proofSketch}{%
  \par\noindent\hspace{2em}{\itshape Proof sketch: }%
}{%
  \hspace*{\fill}~\IEEEQED\par%
}
\newcommand\half{\tfrac 1 2}
\newcommand\defeq{\triangleq} 
\newcommand\modop{\ {\rm mod}\ } 
\newcommand{\mo}{{-1}} 
\newcommand\F{\mathbb F\xspace} 
\newcommand\ZZ{\mathbb Z\xspace}
\newcommand\NN{\mathbb N\xspace}
\newcommand{\LT}[1]{\textnormal{\footnotesize LT}(#1)}
\newcommand{\LP}[1]{\textnormal{\footnotesize LP}(#1)}
\newcommand{\OD}[1]{{\operatorname{D}(#1)}}
\newcommand{\params}[3]{[#1,\ #2,\ #3]}
\algrenewcommand\alglinenumber[1]{{\scriptsize#1}}
\algrenewcommand\algorithmicrequire{\textbf{Input:}}
\algrenewcommand\algorithmicensure{\textbf{Output:}}
\newcommand{\Ifline}[2]{\State \textbf{if }#1{ \textbf{then} }#2}
\renewcommand{\vec}[1]{\bm{#1}}
\newcommand{\Mod}[1]{\mathcal{#1}}
\newcommand{\maxdeg}[1]{{\rm maxdeg}\,#1}
\renewcommand{\OD}[1]{{\Delta(#1)}}
\newcommand{\Wmap}{\Phi}
\newcommand{\matsize}{{(\ell+1)\times(\ell+1)}}
\newcommand\val{\psi} 
\newcommand{\shiftbox}[2]{\hspace{#1}#2\hspace{-#1}}
\begin{document}
\sloppy

\title{Generalised Multi-sequence Shift-Register Synthesis using Module Minimisation} 

\author{
  \IEEEauthorblockN{Johan S.~R.~Nielsen}
  \IEEEauthorblockA{Department of Applied Mathematics and Computer Science, Technical University of Denmark\\
    Email: jsrn@jsrn.dk} 
}


\maketitle
\begin{abstract}
  We show how to solve a generalised version of the Multi-sequence Linear Feedback Shift-Register (MLFSR) problem using minimisation of free modules over $\F[x]$.
  We show how two existing algorithms for minimising such modules run particularly fast on these instances.
  Furthermore, we show how one of them can be made even faster for our use.
  With our modelling of the problem, classical algebraic results tremendously simplify arguing about the algorithms.
  For the non-generalised MLFSR, these algorithms are as fast as what is currently known.
  We then use our generalised MLFSR to give a new fast decoding algorithm for Reed Solomon codes.
\end{abstract}

\section{Introduction}
\label{sec:introduction}

The Multi-sequence Linear Feedback Shift-Register (MLFSR) synthesis problem has many practical applications in fields such as coding theory, cryptography and systems theory, see e.g.~the references in \cite{schmidt06}.
The problem can be formulated as follows: over some field $\F$, given $\ell$ polynomials $S_1(x),\ldots,S_\ell(x) \in \F[x]$ and $\ell$ ``lengths'' $m_1,\ldots,m_\ell \in \ZZ_+$, find a lowest-degree polynomial $\Lambda(x)$ such that there exists polynomials $\Omega_1(x), \ldots, \Omega_\ell(x)$ satisfying 
\begin{IEEEeqnarray*}{rCl+l}
  \Lambda(x)S_i(x) &\equiv& \Omega_i(x) \mod x^{m_i} \\
  \deg \Lambda &>& \deg \Omega_i   , & i=1,\ldots,\ell
\end{IEEEeqnarray*}
Several algorithms exist for solving this problem, some using Divide \& Conquer (D\&C) techniques and some not.
Of the latter sort, the fastest have running time $O(\ell m^2)$, where $m = \max\{m_i\}$: Schmidt and Sidorenko's corrected version of Feng and Tzeng's Berlekamp--Massey generalisation \cite{schmidt06,fengTzeng}; as well as Wang et al.'s lattice minimisation approach \cite{wang08}.
The best DC algorithm is Sidorenko and Bossert's variant of the corrected Feng--Tzeng BMA \cite{sidorenko11skew} and has running time $O(\ell^3 m\log^2 m\log\log m)$.
Obviously, whichever is fastest depends on the relative size of $\ell$ and $m$.

In this paper, we give algorithms that solve the following natural generalisation (MgLFSR): given $S_1(x),\ldots,S_\ell(x) \in \F[x]$, moduli $G_1(x),\ldots,G_\ell(x) \in \F[x]$ as well as weights $\nu \in \ZZ_+$ and $w_0,\ldots,w_\ell \in \NN_0$, find a lowest-degree polynomial $\Lambda(x)$ such that there exist polynomials $\Omega_1(x),\ldots, \Omega_\ell(x)$ satisfying
\begin{IEEEeqnarray}{rCl+l}
  \Lambda(x)S_i(x) &\equiv& \Omega_i(x) \mod G_i(x)  \notag \\
  \nu\deg \Lambda + w_0 &>& \nu\deg \Omega_i + w_i , & i=1,\ldots,\ell \label{eqn:mglfsr}
\end{IEEEeqnarray}
We model the above problem as that of finding a ``minimal'' vector in a certain free $\F[x]$ module.
Such a vector can be found as an element of any basis of the module which satisfies certain minimality properties, and standard algorithms in the literature can compute such a basis.
We describe the Mulders--Storjohann algorithm \cite{mulders03} and give an improved complexity analysis for our case, arriving at the running time $O(\ell^2 m^2)$, where $m = \max_i\{\deg G_i + w_i\nu^\mo\}$.
We then demonstrate how this algorithm is amenable to two distinct speed-ups:
\begin{itemize}
  \item A D\&C variant achieving $O(\ell^3 m\log^2 m\log\log m)$; for general module reduction, this algorithm is known as Alekhnovich's \cite{alekhnovich05}, but we point out it is a variant of Mulders--Storjohann.
  \item A new demand-driven variant utilising the special form of the module of the MgLFSR to achieve complexity $O(\ell m \tilde P(m))$, where $\tilde P(m) = m$ if all $G_i$ are sparse (in particular, if they are powers of $x$) and $\tilde P(m) = m\log m\log\log m$ otherwise.
\end{itemize}
These complexities match the best known ones for the MLFSR case.
Our approach draws much inspiration from Fitzpatrick's module view on the classic Key Equation \cite{fitzpatrick95}, and can be seen as a natural extension to this.
Though our initial aim was a solution to the MLFSR, the MgLFSR emerged as generalisations also easy handled; in Section \ref{sec:decoding} we give an application of this generality with a new algorithm for decoding Reed Solomon codes beyond half the minimum distance.

\section{Preliminaries}
\label{sec:preliminiaries}

\subsection{Notation}

In the sequel, we will refer to the $S_i$, $G_i$ as well as the weights $\nu,w_0,\ldots, w_\ell$ as being from a particular instance of the MgLFSR.
We will use the term ``\emph{solution}'' of this MgLFSR for any vector $(\lambda, \omega_1,\ldots,\omega_\ell) \in \F[x]^{\ell+1}$ which satisfies the equations of \eqref{eqn:mglfsr} for $i=1,\ldots,\ell$; a solution where $\deg \lambda$ is minimal is called a \emph{minimal solution}, and we seek one such.

We will assume that $\deg S_i < \deg G_i$ for each $i$; for otherwise replacing $S_i$ with $(S_i \modop G_i)$ admits exactly the same solutions to the MgLFSR. 
We also assume that $w_0 < \max_i\{\deg S_i + w_i\}$ since otherwise $(1,S_1,\ldots,S_\ell)$ is the minimal solution.

We extensively deal with vectors and matrices over $\F[x]$.
We use the following notational conventions:
\begin{itemize}
  \item A matrix is named uppercase: $V$. Rows use the same letter lowercase and indexed: $\vec v_i$.
  If $\vec v$ is a vector, then $v_j$ are its elements; the cells of matrices have double subscripts: $v_{i,j}$.
  We'll use zero-indexing, so if $\vec v$ has length $\ell+1$ its elements are $v_0,\ldots,v_\ell$.
  \item The degree of a non-zero vector $\vec v$ is $\deg \vec v = \max_i\{\deg v_i\}$.
  The degree of a matrix $V$ is $\deg V = \sum_{i} \deg \vec v_i$.
  The \emph{max-degree} of $V$ is $\maxdeg V = \max_{i,j}\{ \deg v_{i,j} \}$.
  \item Let the \emph{leading position} of a non-zero vector $\vec v$ be $\LP{\vec v} = \max\{ j \mid \deg v_j = \deg \vec v \}$. The \emph{leading term} is the polynomial $\LT{\vec v} = v_{\LP{\vec v}}$.
\end{itemize}

In complexity estimates, we will let $m = \max_i\{ \deg G_i + \frac {w_i}{\nu} \}$.
$P(m)$ will be the cost of multiplying two polynomials of degree at most $m$; we can set $P(m) = m\log m\log\log m$ using Sch\"onhage-Strassen, see e.g.~\cite[Theorem 8.23]{gathen}.

\subsection{Satisfying the congruence equations}

We can consider the space $\Mod M$ of all vectors $(\lambda, \omega_1, \ldots,\omega_\ell) \in \F[x]^{\ell+1}$ such that $\lambda S_i \equiv \omega_i \mod G_i$ for $i = 1,\ldots,\ell$.
Solutions are thus those vectors such that $\nu\deg \lambda + w_0 > \nu\deg \omega_i + w_i$.

All restrictions defining $\Mod M$ are $\F[x]$-linear so $\Mod M$ is a module over $\F[x]$.
Shortly, we'll see that $\Mod M$ is free, so any finite basis can be represented as a matrix where each row corresponds to a basis element.
We will in the sequel often simply say ``basis'' for such a matrix representation.
\begin{lemma}
The following is a basis for $\Mod M$:
\[
  M = 
  \left[
    \begin{matrix}
       1 & S_1 & S_2 & \cdots & S_\ell \\
         & G_1 &     & \raisebox{-1ex}{\shiftbox{1em}{\multirow{2}{*}{\large 0}}}   &       \\
         &     & G_2 &        &       \\
         & \raisebox{1ex}{\multirow{2}{*}{\large 0}}   &     & \ddots &       \\
         &     &     &        & G_\ell
    \end{matrix}
  \right]
\]
\end{lemma}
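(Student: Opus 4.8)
The plan is to show two things: first that every row of the displayed matrix $M$ lies in the module $\Mod M$, and second that these rows span all of $\Mod M$ and are $\F[x]$-linearly independent. For membership, I would check each row against the defining congruences $\lambda S_i \equiv \omega_i \bmod G_i$. The top row $(1, S_1, \ldots, S_\ell)$ corresponds to $\lambda = 1$, $\omega_i = S_i$, and trivially satisfies $1\cdot S_i \equiv S_i \bmod G_i$. The $(j{+}1)$-th row (for $1 \le j \le \ell$) is $G_j$ placed in column $j$ with zeros elsewhere, corresponding to $\lambda = 0$ and $\omega_j = G_j$, $\omega_i = 0$ for $i \ne j$; here we need $0\cdot S_i \equiv \omega_i \bmod G_i$ for all $i$, which holds since $\omega_i$ is either $0$ or $G_j \equiv 0 \bmod G_j$. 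Hence every row is in $\Mod M$.

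Next I would argue spanning. Take an arbitrary $(\lambda, \omega_1, \ldots, \omega_\ell) \in \Mod M$. The idea is to subtract off $\lambda$ times the first row to zero out the leading coordinate, leaving a vector $(0, \omega_1 - \lambda S_1, \ldots, \omega_\ell - \lambda S_\ell)$. By the congruence defining $\Mod M$, each entry $\omega_i - \lambda S_i$ is divisible by $G_i$, so this residual vector equals $\sum_{i=1}^\ell q_i \cdot (\text{row } i{+}1)$ for appropriate quotients $q_i = (\omega_i - \lambda S_i)/G_i \in \F[x]$. Therefore the original vector is an $\F[x]$-combination of the rows of $M$, establishing that the rows generate $\Mod M$.

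For linear independence, I would invoke the triangular structure: $M$ is upper triangular with diagonal entries $1, G_1, \ldots, G_\ell$, all nonzero, so $\det M = \prod_j G_j \ne 0$, and an $\F[x]$-matrix with nonzero determinant has $\F[x]$-linearly independent rows. Combined with spanning, this shows the rows form a basis, and in particular $\Mod M$ is free of rank $\ell+1$, as anticipated in the surrounding text.

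The main obstacle, such as it is, lies in the spanning step: one must confirm that the quotients $q_i = (\omega_i - \lambda S_i)/G_i$ are genuine polynomials rather than mere rational functions. This is exactly what the module-defining congruence $\lambda S_i \equiv \omega_i \bmod G_i$ guarantees, so no real difficulty arises, but it is the one place where the specific structure of $\Mod M$ is essential and where a careless argument could slip. Everything else is routine verification of membership and the standard determinant criterion for independence.
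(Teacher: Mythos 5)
Your proposal is correct and follows essentially the same route as the paper's proof: verify row membership directly, then express an arbitrary $(\lambda,\omega_1,\ldots,\omega_\ell)\in\Mod M$ as $\lambda$ times the first row plus the quotients $(\omega_i-\lambda S_i)/G_i$ times the remaining rows. The only difference is that you also spell out linear independence via the triangular determinant, which the paper leaves implicit.
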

\begin{proof}
  Clearly, each row of $M$ is in $\Mod M$.
  Since any vector $\vec v \in \Mod M$ satisfies $v_0S_i \equiv v_i \mod G_i$ for $i=1,\ldots,\ell$, it means there exists some $p_1,\ldots,p_\ell \in \F[x]$ such that $v_i = v_0S_i + p_iG_i$.
  Therefore $\vec v = v_0\vec m_0 + p_1\vec m_1 + \ldots + p_\ell \vec m_\ell$.
\end{proof}

To deal with the weights of the MgLFSR in an easy manner, we will introduce a mapping which will ``embed'' the weights into the basis.
Define $\Wmap: \F[x]^{\ell+1} \rightarrow \F[x]^{\ell+1}$ by
\[
  \big(a_0(x),\ldots,a_\ell(x) \big) \mapsto
    \big(x^{w_0}a_0(x^{\nu}), \ldots, x^{w_\ell}a_\ell(x^{\nu}) \big)
\]
In the case of MLFSR, $\Wmap$ is simply the identity function.
Extend $\Wmap$ element-wise to sets of vectors, and extend $\Wmap$ row-wise to $\matsize$ matrices such that the $i$th row of $\Wmap(V)$ is $\Wmap(\vec v_i)$.
Note that $\Wmap(\Mod M)$ is a free $\F[x^{\nu}]$-module of dimension $\ell+1$, and that any basis of it is by $\Wmap^\mo$ sent back to a basis of $\Mod M$.
\begin{lemma}
  \label{lem:minSolWeighted}
  A non-zero $\vec s \in \Mod M$ is a minimal solution to the MgLFSR if and only if $\LP{\Wmap(\vec s)} = 0$ and for all non-zero $\Wmap(\vec b) \in \Wmap(\Mod M)$ with $\LP{\Wmap(\vec b)}=0$ it holds that $\deg \Wmap(\vec s) \leq \deg \Wmap(\vec b)$.
\end{lemma}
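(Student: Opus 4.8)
The plan is to verify both conditions of the lemma by pushing the two requirements that define a minimal solution---namely that $\vec s$ be a solution at all, and that its $\deg\lambda$ be smallest among solutions---through the map $\Wmap$ and reading them off as statements about leading positions and degrees. The computation underlying everything is that if $\vec v=(\lambda,\omega_1,\ldots,\omega_\ell)$, then the $0$th entry of $\Wmap(\vec v)$ is $x^{w_0}\lambda(x^{\nu})$, of degree $\nu\deg\lambda+w_0$, while the $i$th entry is $x^{w_i}\omega_i(x^{\nu})$, of degree $\nu\deg\omega_i+w_i$ (with the convention that the zero polynomial has degree $-\infty$). Thus the weighted inequalities of \eqref{eqn:mglfsr} are precisely comparisons of these component degrees.

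First I would show that, for a non-zero $\vec v\in\Mod M$, the condition $\LP{\Wmap(\vec v)}=0$ is equivalent to $\vec v$ being a solution. Recall that $\LP{\Wmap(\vec v)}$ is the \emph{largest} index $j$ at which the maximal entry-degree $\deg\Wmap(\vec v)$ is attained. Since $0$ is the smallest index, $\LP{\Wmap(\vec v)}=0$ can hold only if no index $j\geq 1$ attains that maximum, i.e.\ only if $\deg(\Wmap(\vec v))_0>\deg(\Wmap(\vec v))_i$ for every $i\geq 1$. By the degree computation above this reads $\nu\deg\lambda+w_0>\nu\deg\omega_i+w_i$ for all $i$, which is exactly \eqref{eqn:mglfsr}. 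Conversely, if $\vec v$ is a solution these same strict inequalities force the $0$th entry to be the unique entry of maximal degree, so $\LP{\Wmap(\vec v)}=0$. (Note that a non-zero solution necessarily has $\lambda\neq 0$, since $\lambda=0$ would make the left-hand side of \eqref{eqn:mglfsr} equal to $-\infty$ and hence all $\omega_i=0$.)

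It then remains to match degree-minimality. For any solution $\vec v$ the first step gives $\deg\Wmap(\vec v)=\deg(\Wmap(\vec v))_0=\nu\deg\lambda+w_0$, a strictly increasing function of $\deg\lambda$ since $\nu\in\ZZ_+$. Hence, among solutions, $\deg\lambda$ is minimal if and only if $\deg\Wmap(\vec v)$ is minimal. Using that $\Wmap$ is injective (so that $\Wmap(\vec b)\in\Wmap(\Mod M)$ ranges exactly over $\vec b\in\Mod M$), and that the hypotheses ``non-zero'' and $\LP{\Wmap(\vec b)}=0$ together select exactly the solutions $\vec b$, the two directions now follow: if $\vec s$ is a minimal solution then $\LP{\Wmap(\vec s)}=0$ and $\deg\Wmap(\vec s)\leq\deg\Wmap(\vec b)$ for every solution $\vec b$; conversely those two properties force the $\deg\lambda$ of $\vec s$ to be no larger than that of any solution, i.e.\ $\vec s$ is minimal.

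The only genuinely delicate point is the first step: one must use the precise definition of $\LP$ as the \emph{maximal}-index realisation of the top degree in order to conclude that $\LP{\Wmap(\vec v)}=0$ already encodes the \emph{strict} domination of the $0$th component over all others, rather than merely a tie for the top degree. Everything else is the monotone bookkeeping of degrees under $\Wmap$, and no further estimates are needed.
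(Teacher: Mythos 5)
Your proof is correct and follows essentially the same route as the paper's: first translating the solution condition into $\LP{\Wmap(\vec v)}=0$ via the degree identity $\deg(x^{w_i}\omega_i(x^\nu))=\nu\deg\omega_i+w_i$, then observing that for such vectors $\deg\Wmap(\vec v)$ is a strictly increasing function of $\deg\lambda$, so minimality transfers. Your added care about the maximal-index convention of $\LP{\cdot}$ and the $\lambda\neq 0$ edge case is sound but does not change the argument.
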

\begin{proof}
  $\vec s = (\Lambda,\Omega_1,\ldots,\Omega_\ell) \in \Mod M$ is a solution to the MgLFSR if and only if $\LP{\Wmap(\vec s)} = 0$, since $\nu\deg \Lambda + w_0 > \nu\deg \Omega_i + w_i \iff \deg (x^{w_0}\Lambda(x^{\nu})) > \deg(x^{w_i}\Omega_i(x^{\nu}))$.
  Since $\deg \Wmap(\vec b) = \nu\deg b_0$ whenever $\LP{\Wmap(\vec b)}=0$, $\vec s$ is then a minimal solution if and only if $\deg(\Wmap(\vec s))$ is minimal for vectors in $\Wmap(\Mod M)$ with leading position 0.
\end{proof}

\subsection{Module minimisation}

\begin{definition}
  \label{def:weakpopov}
  A full-rank matrix $V \in \F[x]^\matsize$ is in \emph{weak Popov} form if an only if the leading position of all rows are different.
  The \emph{orthogonality defect} of $V$ is $\OD V \defeq \deg V - \deg \det V$.
\end{definition}
The concept of orthogonality defect was introduced by Lenstra \cite{lenstra85} for estimating the running time of his algorithm on module minimisation; we will use it to a similar effect.
The following lemma gives the foundations for such a use; we omit the proof which can be found in \cite{nielsen13}:
\begin{lemma}[\protect{\cite[Lemma 11]{nielsen13}}]
  \label{lem_popovreduces}
  If a matrix $V$ over $\F[x]$ is in weak Popov form then $\OD V = 0$.
\end{lemma}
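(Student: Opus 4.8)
The plan is to compare $\deg \det V$ directly with $\deg V = \sum_i \deg \vec v_i$ by pinning down the top-degree coefficient of the determinant. Since $V$ is full-rank of size $\matsize$ and in weak Popov form, its leading positions $\LP{\vec v_0},\ldots,\LP{\vec v_\ell}$ are distinct, and being $\ell+1$ distinct elements of $\{0,\ldots,\ell\}$ they form a permutation of that set. Writing $d_i = \deg \vec v_i$ and $p_i = \LP{\vec v_i}$, the Leibniz expansion $\det V = \sum_\sigma \operatorname{sgn}(\sigma)\prod_i v_{i,\sigma(i)}$ gives immediately that $\deg \det V \le \sum_i d_i = \deg V$, because every entry satisfies $\deg v_{i,j} \le d_i$. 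So the whole task reduces to showing this inequality is an equality.

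First I would isolate the coefficient of $x^{\deg V}$ in $\det V$. Define the scalar matrix $L$ over $\F$ by letting $L_{i,j}$ be the coefficient of $x^{d_i}$ in $v_{i,j}$, which is $0$ whenever $\deg v_{i,j} < d_i$. For a fixed $\sigma$, the summand $\prod_i v_{i,\sigma(i)}$ has degree at most $\sum_i d_i$, and its coefficient of $x^{\sum_i d_i}$ is exactly $\prod_i L_{i,\sigma(i)}$: if some factor has degree strictly below $d_i$ then both this product and the summand's contribution to that coefficient vanish. Summing over $\sigma$ therefore identifies the coefficient of $x^{\deg V}$ in $\det V$ with $\det L$, and the lemma reduces to the single claim that $\det L \ne 0$.

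To prove $\det L \ne 0$ I would exploit the triangular structure forced by the leading positions. By definition $p_i = \max\{j : \deg v_{i,j} = d_i\}$, so $L_{i,p_i} \ne 0$ while $L_{i,j} = 0$ for every $j > p_i$. Permuting the rows of $L$ so that the unique row with leading position $j$ becomes the $j$th row — legitimate precisely because the $p_i$ range over all of $\{0,\ldots,\ell\}$ — produces a lower-triangular matrix whose diagonal entries are the nonzero $L_{i,p_i}$. Its determinant is the product of those diagonal entries and hence nonzero; since a row permutation alters the determinant only by a sign, $\det L \ne 0$ as well. Combining the steps, the coefficient of $x^{\deg V}$ in $\det V$ is nonzero, so $\deg \det V = \deg V$ and $\OD V = \deg V - \deg \det V = 0$.

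The one place that demands care is the passage from $\det V$ to $\det L$ together with the use of \emph{maximality} in the definition of $\LP{\cdot}$. It is not enough that each row has some entry of maximal degree; it is the fact that $p_i$ is the \emph{largest} such index that kills all entries of $L$ to the right of each pivot and thereby yields the clean triangular form. Without this sharper property the off-diagonal entries need not vanish, and $\det L$ could in principle be zero through cancellation, which is exactly the subtlety the argument must rule out.
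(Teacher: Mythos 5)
Your proof is correct. Note that the paper itself omits a proof of this lemma, deferring to \cite[Lemma 11]{nielsen13}, so there is no in-paper argument to compare against; your reconstruction is the standard one and is sound. The chain of reasoning — Leibniz expansion bounding $\deg\det V$ by $\deg V$, identification of the coefficient of $x^{\deg V}$ in $\det V$ with $\det L$ for the row-wise leading-coefficient matrix $L$, and the observation that distinctness \emph{plus maximality} of the leading positions makes $L$ a row permutation of a triangular matrix with nonzero diagonal — is exactly the argument used in the cited source, and you correctly flag the one genuinely delicate point, namely that it is the maximality in the definition of $\LP{\cdot}$ (not merely the existence of an entry of maximal degree) that forces $L_{i,j}=0$ for $j>\LP{\vec v_i}$ and hence prevents $\det L$ from vanishing by cancellation.
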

Note that for any square matrix $V$, $\OD V \geq 0$; thus since the determinant is the same for \emph{any} basis of the module for which $V$ is a basis, $\OD V$ measures how much $\deg V$ is greater than the minimal degree possible.
Due to its special form, $M$ has particularly low orthogonality defect:
\begin{lemma}
  \label{lem:Mortho}
  $\OD{\Wmap(M)} = \max\{ w_i + \nu\deg S_i(x) \} - w_0 \leq \nu m - w_0$.
\end{lemma}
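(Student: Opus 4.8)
The plan is to compute the two terms in $\OD{\Wmap(M)} = \deg \Wmap(M) - \deg\det \Wmap(M)$ directly from the explicit shape of the matrix, and then bound the outcome using the two standing assumptions from the Preliminaries. First I would write down $\Wmap(M)$ explicitly: applying $\Wmap$ row-wise to $M$ keeps it upper triangular, with diagonal entries $x^{w_0}$ in cell $(0,0)$ and $x^{w_i}G_i(x^{\nu})$ in cell $(i,i)$ for $i=1,\ldots,\ell$, while the top row additionally carries the off-diagonal entries $x^{w_i}S_i(x^{\nu})$ in the columns $i=1,\ldots,\ell$.

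Since the matrix is triangular, its determinant is the product of the diagonal entries, giving $\deg\det\Wmap(M) = w_0 + \sum_{i=1}^\ell (w_i + \nu\deg G_i)$. For $\deg \Wmap(M) = \sum_i \deg \vec v_i$, each row $i\geq 1$ has a single nonzero entry and hence contributes $w_i + \nu\deg G_i$, whereas the top row contributes $\max\{w_0,\ \max_{i\geq 1}\{w_i + \nu\deg S_i\}\}$. Subtracting, the contributions of rows $1,\ldots,\ell$ cancel term-by-term against the corresponding diagonal factors of the determinant, leaving $\OD{\Wmap(M)} = \max\{w_0,\ \max_i\{w_i + \nu\deg S_i\}\} - w_0$.

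To finish the equality I would invoke the standing assumption $w_0 < \max_i\{\deg S_i + w_i\}$: because $\nu \geq 1$ and $\deg S_i \geq 0$, this already forces $w_0 < \max_i\{w_i + \nu\deg S_i\}$, so the maximum defining the top-row degree is attained off the diagonal and $\OD{\Wmap(M)} = \max_i\{w_i + \nu\deg S_i\} - w_0$. The stated inequality is then immediate from the other assumption $\deg S_i < \deg G_i$: term by term $w_i + \nu\deg S_i < w_i + \nu\deg G_i$, so $\max_i\{w_i + \nu\deg S_i\} \leq \max_i\{w_i + \nu\deg G_i\} = \nu m$ by the definition $m = \max_i\{\deg G_i + w_i/\nu\}$.

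The computation itself is routine once the matrix is written out; the only point needing genuine care is justifying that the top-row degree is governed by an off-diagonal entry rather than by $x^{w_0}$, which is exactly where the assumption $w_0 < \max_i\{\deg S_i + w_i\}$ enters. I must check that the $\nu$-scaling built into $\Wmap$ does not spoil this — and it does not, since multiplying degrees by $\nu \geq 1$ only enlarges the off-diagonal maximum while leaving $w_0$ fixed.
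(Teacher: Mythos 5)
Your proposal is correct and follows essentially the same route as the paper, which simply notes that triangularity gives $\det(\Wmap(M)) = x^{w_0}\prod_{i=1}^\ell x^{w_i}G_i(x^{\nu})$ and declares that the lemma follows. You have merely made explicit the row-degree bookkeeping and the role of the standing assumptions $w_0 < \max_i\{\deg S_i + w_i\}$ and $\deg S_i < \deg G_i$, which the paper leaves implicit.
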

\begin{proof}
  Since $M$ is upper triangular $\det (\Wmap(M)) = x^{w_0}\prod_{i=1}^\ell x^{w_i}G_i(x^{\nu})$ and the lemma follows.
\end{proof}
Lastly, a crucial property of matrices in weak Popov form:
\begin{lemma}
  \label{lem:weakGivesMin}
  Let $V \in \F[x]^\matsize$ be a basis in weak Popov form of a module $\Mod V$.
  Any non-zero $\vec b \in \Mod V$ satisfies $\deg \vec v \leq \deg \vec b$ where $\vec v$ is the row of $V$ with $\LP{\vec v} = \LP{\vec b}$.
\end{lemma}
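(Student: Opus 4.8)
The plan is to exploit the defining property of weak Popov form -- that the leading positions of the rows are pairwise distinct -- in order to control cancellation in an arbitrary $\F[x]$-combination of the rows. Since $V$ is full-rank of size $\matsize$ and in weak Popov form, its $\ell+1$ leading positions are distinct elements of $\{0,\ldots,\ell\}$, hence a permutation of that set. In particular every position occurs as the leading position of exactly one row, so the row $\vec v$ referred to in the statement is well-defined and unique, and it certainly exists for $\LP{\vec b} \in \{0,\ldots,\ell\}$.

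Given a non-zero $\vec b \in \Mod V$, I would first write it in the basis as $\vec b = \sum_{j=0}^\ell u_j \vec v_j$ with $u_j \in \F[x]$ not all zero. The central quantity is the ``predicted degree'' $\delta \defeq \max\{\deg u_j + \deg \vec v_j \mid u_j \neq 0\}$; since every coordinate of $\vec b$ has degree at most $\delta$, we get $\deg \vec b \leq \delta$ for free. Among the indices realising this maximum I would pick the one, say $j_0$, whose row has the \emph{largest} leading position, and set $p = \LP{\vec v_{j_0}}$.

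The key step is to inspect coordinate $p$ of $\vec b$, namely $b_p = \sum_j u_j v_{j,p}$, and show the top-degree term survives. The contribution of $j_0$ has degree exactly $\delta$, because $p = \LP{\vec v_{j_0}}$ forces $\deg v_{j_0,p} = \deg \vec v_{j_0}$. Every other contribution $u_j v_{j,p}$ has strictly smaller degree: if $\deg u_j + \deg \vec v_j < \delta$ this is immediate, while if $j$ also realises the maximum then its leading position is distinct from and, by the choice of $j_0$, strictly smaller than $p$, so $p$ lies strictly beyond the largest maximal-degree position of $\vec v_j$ and hence $\deg v_{j,p} < \deg \vec v_j$. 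Thus there is no cancellation and $\deg b_p = \delta$, which together with $\deg \vec b \leq \delta$ yields $\deg \vec b = \delta$. Running the same comparison on coordinates $k > p$ shows $\deg b_k < \delta$, so in fact $\LP{\vec b} = p$.

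Finally I would conclude: by the permutation remark the unique row $\vec v$ of $V$ with $\LP{\vec v} = \LP{\vec b} = p$ is precisely $\vec v_{j_0}$, whence $\deg \vec v = \deg \vec v_{j_0} \leq \deg u_{j_0} + \deg \vec v_{j_0} = \delta = \deg \vec b$, as required. I expect the main obstacle to be exactly the no-cancellation argument at position $p$: it relies on the precise convention that $\LP$ picks the \emph{largest} position attaining a row's maximal degree, so that moving strictly past a row's leading position strictly lowers its degree in that coordinate. This ``predictable degree'' behaviour is what the distinct-leading-position hypothesis buys us, and everything else is bookkeeping around it.
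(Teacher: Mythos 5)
Your proof is correct and follows essentially the same route as the paper: decompose $\vec b$ over the basis and use the pairwise-distinct leading positions to rule out cancellation of the top-degree term, concluding $\deg\vec b = \deg u_{j_0} + \deg\vec v_{j_0} \geq \deg\vec v_{j_0}$. The only difference is that you spell out the no-cancellation/predictable-degree argument in full, whereas the paper states it in one line; your version is a more rigorous rendering of the same idea.
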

\begin{proof}
  Since $V$ is a basis of $\Mod V$, there exists $p_0, \ldots, p_\ell \in \F[x]$ such that $\vec b = p_0\vec v_1 + \ldots + p_\ell \vec v_\ell$ where the $\vec v_i$ are rows of the $V$.
  But the $\vec v_i$ all have different leading position, so the $p_i\vec v_i$ must as well for those $p_i \neq 0$.
  Therefore, there is exactly one $j$ such that $\LP{\vec b} = \LP{v_j}= \LP {p_j\vec v_j}$ and $\deg \vec b = \deg (p_j\vec v_j) = \deg p_j + \deg \vec v_j$.
\end{proof}

Combining Lemma \ref{lem:minSolWeighted} and Lemma \ref{lem:weakGivesMin} we see that a basis for $\Wmap(\Mod M)$ in weak Popov form must contain a row $\Wmap(\vec s)$ such that $\vec s$ is a minimal solution to the MgLFSR.
Any algorithm which brings $\F[x]$-matrices to weak Popov form can thus be used to solve the MgLFSR.

\section{Simple minimisation}
\label{sec:simple_minimisation}

\begin{definition}
  \label{def:rowred} 
  Applying a \emph{row reduction} on a full-rank matrix over $\F[x]$ means to find two different rows $\vec v_i, \vec v_j$, $\deg \vec v_i \leq \deg \vec v_j$ such that $\LP{\vec v_i} = \LP{\vec v_j}$, and then replacing $\vec v_j$ with $\vec v_j - \alpha x^\delta \vec v_i$ where $\alpha \in \F$ and $\delta \in \NN_0$ are chosen such that the leading term of the polynomial $\LT{\vec v_j}$ is cancelled.
\end{definition}

Define a \emph{value} function for vectors $\val: \F[x]^{\ell+1} \rightarrow \NN_0$:
\begin{equation}
  \label{eqn:value}
  \val(\vec v) = (\ell+1)\deg \vec v + \LP{\vec v}
\end{equation}
\begin{lemma}
  \label{lem:rowreddec}
  If $\vec v_j'$ is the vector replacing $\vec v_j$ in a row reduction, then $\val(\vec v_j') < \val(\vec v_j)$.
\end{lemma}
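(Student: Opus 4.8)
The plan is to track separately the two ingredients of $\val$: the degree $\deg \vec v_j$, which enters with the large weight $\ell+1$, and the leading position $\LP{\vec v_j}$, which always lies in the range $0,\ldots,\ell$. The guiding intuition is a dichotomy: either the row reduction strictly lowers the degree, in which case the weight $\ell+1$ swamps any possible increase in leading position, or it leaves the degree fixed while strictly lowering the leading position. Either way $\val$ must drop.

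First I would fix notation, writing $p = \LP{\vec v_j} = \LP{\vec v_i}$ and $d = \deg \vec v_j$. Since the reduction cancels $\LT{\vec v_j}$ at coordinate $p$, the shift must be $\delta = d - \deg \vec v_i \geq 0$ (nonnegative because the reduction requires $\deg \vec v_i \leq \deg \vec v_j$), so $\alpha x^\delta \vec v_i$ has degree exactly $d$, matching $\vec v_j$; consequently $\deg \vec v_j' \leq d$. The coordinate-wise bookkeeping is the crux. For any index $k > p$, the entry $(v_j)_k$ has degree strictly below $d$ by maximality of $p = \LP{\vec v_j}$, and likewise $(v_i)_k$ has degree strictly below $\deg \vec v_i$ by maximality of $p = \LP{\vec v_i}$, so $\alpha x^\delta (v_i)_k$ has degree strictly below $d$; hence the difference $(v_j')_k$ has degree strictly below $d$. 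At index $k = p$ the top coefficients cancel by the choice of $\alpha$, so $(v_j')_p$ too has degree strictly below $d$.

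This splits the proof into two cases. If $\deg \vec v_j' < d$, then using $\LP{\vec v_j'} \leq \ell$ and $\LP{\vec v_j} \geq 0$ I would bound $\val(\vec v_j') \leq (\ell+1)(d-1) + \ell = (\ell+1)d - 1 < (\ell+1)d \leq \val(\vec v_j)$. If instead $\deg \vec v_j' = d$, then the coordinate analysis just described shows that the value $d$ can be attained only at some index strictly smaller than $p$, so $\LP{\vec v_j'} < p = \LP{\vec v_j}$; with the degrees equal this yields $\val(\vec v_j') < \val(\vec v_j)$ immediately.

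The step I expect to require the most care is the coordinate-wise degree bookkeeping that rules out every index $k \geq p$ from reaching degree $d$ in $\vec v_j'$; once this is pinned down the two cases follow mechanically. In particular one must invoke the maximality clause in the definition of leading position — that $p$ is the \emph{largest} index attaining the top degree — for both $\vec v_i$ and $\vec v_j$, since this is precisely what forces the leading position to drop rather than merely to shift to a higher index.
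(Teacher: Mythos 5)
Your proof is correct and follows essentially the same route as the paper's: bound $\deg \vec v_j' \leq \deg \vec v_j$, dispose of the strict-decrease case via the weight $\ell+1$, and in the equal-degree case use the maximality clause in the definition of $\LP{\cdot}$ for both rows to show every coordinate at or to the right of the common leading position drops below degree $d$, forcing the leading position to decrease. The only difference is that you spell out the coordinate-wise bookkeeping in more detail than the paper does; the substance is identical.
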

\begin{proof}
  We can't have $\deg \vec v_j' > \deg \vec v_j$ since all terms of both $\vec v_j$ and $\alpha x^\delta \vec v_i$ have degree at most $\deg \vec v_j$.
  If $\deg \vec v_j' < \deg \vec v_j$ we are done since $\LP{\vec v_j'} < \ell+1$, so assume $\deg \vec v_j' = \deg \vec v_j$.
  Let $h = \LP{\vec v_j} = \LP{\vec v_i}$.
  By the definition of $\LP{\cdot}$, all terms in both $\vec v_j$ and $\alpha x^\delta \vec v_i$ to the right of $h$ must have degree less than $\deg \vec v_j$, and so also all terms in $\vec v_j'$ to the right of $h$ satisfies this.
  The row reduction ensures that $\deg v'_{j,h} < \deg v_{j,h}$, so it must then be the case that $\LP{\vec v_j'} < h$.
\end{proof}

The following elegant algorithm for general $\F[x]$-module minimisation is due to Mulders and Storjohann \cite{mulders03}.
Correctness and complexity is established in Lemma \ref{lem:storjohann}, whose proof is modelled over the proof in \cite{mulders03} but specialised for input of the form of $\Wmap(M)$.
\begin{algorithm}
  \caption{Mulders--Storjohann}
  \label{alg:storjohann}
  \begin{algorithmic}[1]
    \Require{$V = \Wmap(M)$.}
    \Ensure{A basis of $\Wmap(\Mod M)$ in weak Popov form.}
    \State Apply row reductions on the rows of $V$ until no longer possible.
    \State \Return $V$.
  \end{algorithmic}
\end{algorithm}

\begin{lemma}
  \label{lem:storjohann}
  Algorithm~\ref{alg:storjohann} is correct.
  It performs less than $(\ell+1)(m-w_0\nu^\mo+2)$ row reductions and has asymptotic complexity $O(\ell^2m^2)$.
\end{lemma}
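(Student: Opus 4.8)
My plan is to treat the three claims in turn, with the whole argument turning on the arithmetic structure of the entries of $\Wmap(M)$. For correctness I would first note that a row reduction is an elementary row operation, so it preserves the determinant and keeps the matrix full rank. More importantly, every $\vec v\in\Wmap(\Mod M)$ has each entry in $x^{w_k}\F[x^\nu]$, so $\deg v_k\equiv w_k\pmod\nu$; since a reduction acts on two rows sharing a leading position $h$, the shift exponent $\delta=\deg\vec v_j-\deg\vec v_i$ is a multiple of $\nu$, whence $\alpha x^\delta=\alpha(x^\nu)^{\delta/\nu}\in\F[x^\nu]$. Thus each reduction is an $\F[x^\nu]$-unimodular operation and keeps the rows a basis of the $\F[x^\nu]$-module $\Wmap(\Mod M)$. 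Termination follows from Lemma~\ref{lem:rowreddec}: the potential $\sum_i\val(\vec v_i)$ is a non-negative integer that strictly decreases at each reduction, so only finitely many occur; and when none is possible the leading positions are pairwise distinct, i.e. the matrix is in weak Popov form (Definition~\ref{def:weakpopov}). Combined with Lemmas~\ref{lem:minSolWeighted} and~\ref{lem:weakGivesMin}, the output then contains a minimal solution.

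For the reduction count the key refinement is that, within a single entry, all nonzero exponents differ by multiples of $\nu$, so cancelling the leading coefficient of the leading entry drops that entry's degree by at least $\nu$. From this I would prove the dichotomy that in every reduction of a row $\vec v_j$ its leading position never increases, and either (i) $\LP{\vec v_j}$ strictly decreases, or (ii) $\LP{\vec v_j}$ is unchanged and $\deg\vec v_j$ drops by at least $\nu$. To count case (ii), observe that $\deg\det$ is invariant and the final form has orthogonality defect $0$ by Lemma~\ref{lem_popovreduces}, so the total decrease of $\deg V$ over the run equals $\OD{\Wmap(M)}\le\nu m-w_0$ by Lemma~\ref{lem:Mortho}; as each case-(ii) reduction consumes at least $\nu$ of this, there are at most $m-w_0\nu^\mo$ of them. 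To count case (i), note that $\sum_i\LP{\vec v_i}$ never increases and telescopes: its initial value exceeds its (weak Popov, all-distinct) final value by at most $\ell$. Summing gives a count comfortably within the claimed $(\ell+1)(m-w_0\nu^\mo+2)$.

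For the complexity I would store each entry $x^{w_k}a_k(x^\nu)$ by the coefficient list of $a_k$, which has degree at most $m$; since $\delta$ is a multiple of $\nu$, the multiply-and-subtract $\vec v_j-\alpha x^\delta\vec v_i$ becomes a shift-by-$\delta/\nu$ and subtract on these compact length-$(\ell+1)$ rows, costing $O((\ell+1)m)$ field operations. Multiplying by the $O((\ell+1)m)$ reductions from the previous step yields $O(\ell^2m^2)$.

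The main obstacle I anticipate is handling $\nu$ correctly: a priori the entries of $\Wmap(M)$ have degree up to $\nu m$, so a naive value-function bound would overcount both the number of reductions and the cost per reduction by a factor of $\nu$. The whole argument rests on the single structural fact that every entry lies in $x^{w_k}\F[x^\nu]$---this is what forces each genuine degree drop to be at least $\nu$ (taming the count) and what lets entries be represented with only $O(m)$ coefficients (taming the per-reduction cost). Getting the dichotomy and the orthogonality-defect accounting to interact cleanly with this divisibility is the crux of the proof.
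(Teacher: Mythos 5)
Your correctness argument (the $x^\delta$ scalar is a power of $x^\nu$ because all entries of vectors in $\Wmap(\Mod M)$ lie in $x^{w_k}\F[x^{\nu}]$, hence reductions stay inside the $\F[x^{\nu}]$-module, and termination without a possible reduction means weak Popov form) and your complexity accounting (sparse representation of entries by $O(m)$ coefficients, $O(\ell m)$ per reduction) match the paper. The gap is in your reduction count, specifically in the asserted dichotomy. The claim that a row reduction never increases the leading position of the replaced row is false: Lemma~\ref{lem:rowreddec} forces $\LP{\vec v_j'} < \LP{\vec v_j}$ only in the case $\deg \vec v_j' = \deg \vec v_j$. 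When the degree drops, the leading position can jump \emph{up} to any value in $\{0,\ldots,\ell\}$ (e.g.\ a row of shape $(x, x^3, x^2)$ whose leading entry at position $1$ is cancelled down to degree $0$ acquires leading position $2$). Consequently $\sum_i \LP{\vec v_i}$ is not monotone and does not telescope; your bound of ``at most $\ell$'' leading-position-only reductions fails, because after each of the up-to-$(m - w_0\nu^{-1})$ degree-dropping reductions you may need up to $\ell$ further reductions to walk the leading position back down. This is precisely the source of the factor $(\ell+1)$ in the stated bound; your count of roughly $m - w_0\nu^{-1} + \ell$ is stronger than the lemma claims and is not established.

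The fix is either to charge each degree-dropping reduction an extra $\ell$ for the possible increase of $\sum_i\LP{\vec v_i}$ (recovering the $(\ell+1)$ factor), or to do what the paper does: work with the single potential $\val(\vec v) = (\ell+1)\deg\vec v + \LP{\vec v}$, which strictly decreases at \emph{every} reduction by Lemma~\ref{lem:rowreddec}, observe that $\val(\Wmap(\vec v)) \equiv (\ell+1)w_{\LP{\Wmap(\vec v)}} + \LP{\Wmap(\vec v)} \pmod{(\ell+1)\nu}$ so that $\val$ attains at most $\ell+1$ values on any interval of length $(\ell+1)\nu$, and then bound the per-row reduction count by $\lceil \nu^{-1}(\val(\Wmap(\vec m_i)) - \val(\Wmap(\vec u_i)))\rceil$ and sum using $\OD{\Wmap(M)}$. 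Your underlying ingredients (the $\nu$-divisibility of degree drops and the orthogonality-defect budget via Lemmas~\ref{lem_popovreduces} and~\ref{lem:Mortho}) are the right ones; it is only the decomposition into the two cases, resting on the false monotonicity of the leading position, that needs repair.
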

\begin{proof}
  Since the row reductions are performed over $\F[x]$, we first need to argue that we do not leave the $\F[x^{\nu}]$ module for $V$ to continue to be a basis of $\Wmap(\Mod M)$ after each row reduction: however, since any $\vec u, \vec v \in \Wmap(\Mod M)$ have $\deg u_i \equiv \deg v_i \mod \nu$ for all $i$, the $x^\delta$ scalar in each row reduction is a power of $x^{\nu}$; thus, they are indeed $\F[x^{\nu}]$ row reductions. 
  Since we can apply a row reduction on a matrix if and only if it is not in weak Popov form, the algorithm must bring $V$ to weak Popov form in case it terminates.

  Termination follows directly from Lemma \ref{lem:rowreddec} since the value of a row decreases each time a row reduction is performed.
  We can be more precise, though.
  For any non-zero $\vec v \in \Mod M$:
  \begin{IEEEeqnarray*}{rCl}
    \val(\Wmap(\vec v)) &=& (\ell+1)(\nu\deg v_{\LP{\Wmap(\vec v)}} + w_{\LP{\Wmap(\vec v)}}) + \LP{\Wmap(\vec v)}
    \\   &\equiv& (\ell+1)w_{\LP{\Wmap(\vec v)}} + \LP{\Wmap(\vec v)} \mod (\ell+1)\nu
  \end{IEEEeqnarray*}
  So on any given interval of size $(\ell+1)\nu$, $\val(\Wmap(\vec v))$ can attain at most $\ell+1$ of the values, depending on its leading position.
  Denote now by $\Wmap(U)$ the matrix in weak Popov form returned by the algorithm.
  Due to the above, the algorithm will perform a row reduction on the $i$th row at most $\big\lceil \frac {\ell+1}{(\ell+1)\nu}\big(\val(\Wmap(\vec m_i))-\val(\Wmap(\vec u_i))\big) \big\rceil$ times.
  Since $\deg(\Wmap(U)) = \deg \det(\Wmap(U)) = \deg \det(\Wmap(M))$ and the $\LP{\Wmap(\vec u_i)}$ are all different, 
  the total number of row reductions is then upper bounded by:
  \begin{IEEEeqnarray*}{l}
    {\textstyle \sum_{i=0}^\ell\big\lceil \nu^\mo\big(\val(\Wmap(\vec m_i))-\val(\Wmap(\vec u_i))\big) \big\rceil }
    \\ \leq {\textstyle  \ell+1 + \frac{\ell+1}{\nu}(\deg(\Wmap(M)) - \deg(\Wmap(U))) + \LP{\Wmap(\vec m_0)}}
    \\ \leq \tfrac{\ell+1}{\nu}\OD{\Wmap(M)} + 2\ell + 1
        \IEEEyesnumber \label{eqn:storjohann:rowreds}
  \end{IEEEeqnarray*}

  For the asymptotic complexity, note that during the algorithm, no polynomial in $V$ will have larger degree than $\maxdeg(\Wmap(M)) = \nu m$.
  Since the polynomials in $\Wmap(\Mod M)$ are sparse with only every $\nu$th coefficient non-zero, they can be represented and manipulated as fast as usual polynomials of degree $m$.
  One row reduction consists of $\ell+1$ times scaling and adding two such polynomials.
\end{proof}

\section{The Divide \& Conquer speed-up}
\label{sec:dc}

Algorithm \ref{alg:storjohann} admits a D\&C version which is due to Alekhnovich \cite{alekhnovich05}.
However, he seemed not to be aware of the work of Mulders and Storjohann, and that his algorithm is indeed a variant of theirs.
Since all the formal results we need are in \cite{alekhnovich05}---as well as the more general analysis in \cite{brander10}---we will here only give an overview of the algorithm and its connection to Algorithm \ref{alg:storjohann}, as well as the complexity result.

The algorithm works by structuring its row reductions in a tree-like fashion; more precisely it hinges on the following series of observations, all of which are proved in \cite{alekhnovich05}:
\begin{enumerate}
  \item Imagine the row reductions bundled such that each bundle reduces $\maxdeg V$ by 1, where $V$ is the result of applying all earlier row reductions to the input.
  \item To calculate the row reductions in one such bundle on $V$, one needs for each row $\vec v_i$ of $V$ to know only the monomials in $\vec v_i$ having degree $\deg \vec v_i$.
  \item Therefore, to calculate a series of $t$ such bundles, one needs to know only monomials of degree greater than $\deg \vec v_i-t$. Call the matrix containing only these a $t$-projection of $V$.
  \item Any series of row reductions can be represented as a matrix $U \in \F[x]^\matsize$ where the product $UV$ is then the result of applying those row reductions to $V$.
  \item Thus, we can structure the bundles in a binary tree, where to calculate the row reduction matrix for some node, representing say $t$ bundles, given the matrix $V$, one first recursively calculates the left half of the bundles on a $t/2$-projection of $V$ to get a row reduction matrix $U_1$.
  Then recursively calculate the right half of the bundles on a $t/2$-projection of $U_1V$ to get $U_2$, and the total row reduction matrix becomes $U_2U_1$.
\end{enumerate}
We have exactly the same choice of row reductions as in Algorithm \ref{alg:storjohann}, but the computations are now done on matrices where each cell contains only one monomial (since, in the leaves of the tree, we work on 1-projections), speeding up those calculations by a factor $m$.
Collecting the row reductions is then done using matrix multiplications.

That Alekhnovich's algorithm can bring $\Wmap(M)$ to weak Popov form follows immediately from its general correctness; however, for a better estimate on its running time, we need to correctly consider the effects of weights.
This is not done in \cite{alekhnovich05}, but it was done by Brander in \cite{brander10}.
With observations similar to those in Section \ref{sec:simple_minimisation}, for our case we get:
\begin{lemma}
  Alekhnovich's algorithm on $\Wmap(M)$ has asymptotic complexity $O(\ell^3P(m)\log m)$
\end{lemma}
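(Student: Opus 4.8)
The plan is to combine the generic structure of Alekhnovich's D\&C algorithm with the sharpened row-reduction count already established for our special input in Lemma~\ref{lem:storjohann}. The running time of the algorithm is governed by two quantities: the \emph{depth} of the recursion tree and the cost of the matrix multiplications that collect row reductions at each internal node. Since each bundle of row reductions lowers $\maxdeg V$ by one, and the relevant degrees never exceed $\nu m$ during the computation (exactly as argued in the proof of Lemma~\ref{lem:storjohann}), the total number of bundles---hence the tree depth---is $O(\log(\nu m)) = O(\log m)$ when $\nu$ is regarded as a constant factor absorbed into $m$.

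First I would account for the work at a single node. A node representing $t$ bundles performs two recursive calls on $t/2$-projections and then multiplies two $\matsize$ matrices $U_2, U_1$ whose entries have degree $O(t)$. Using fast polynomial arithmetic, multiplying two such matrices costs $O(\ell^3 P(t))$ field operations, since there are $(\ell+1)^3$ scalar products and each costs $P(t)$. The key point, inherited from observation~(2)--(3) in the algorithm overview, is that because we work on $t$-projections each cell carries only $O(t)$ significant monomials, so the multiplication cost at a node scales with the number of bundles it represents rather than with the full degree $m$.

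Next I would sum over the tree. Writing $T(t)$ for the cost of a node handling $t$ bundles, the recurrence is $T(t) = 2T(t/2) + O(\ell^3 P(t))$. Because $P(t) = t\log t\log\log t$ is quasi-linear, the standard master-theorem analysis for such ``superlinear leaves'' recurrences yields $T(t) = O(\ell^3 P(t)\log t)$; concretely, the per-level matrix-multiplication cost is $O(\ell^3 P(t))$ at each of the $O(\log t)$ levels, and these costs do not decay geometrically because $P$ is superlinear, so the levels contribute an extra logarithmic factor. Setting $t = O(m)$ for the root---which is justified since the total number of bundles equals the number needed to reach weak Popov form, bounded through $\OD{\Wmap(M)} \leq \nu m - w_0$ by Lemma~\ref{lem:Mortho} and the row-reduction count \eqref{eqn:storjohann:rowreds}---gives the claimed $O(\ell^3 P(m)\log m)$.

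The main obstacle I anticipate is the bookkeeping around the weights: one must verify that the ``every $\nu$th coefficient'' sparsity of $\Wmap(\Mod M)$ is preserved under the projections and matrix products, so that arithmetic on degree-$\nu m$ sparse polynomials costs only $P(m)$ rather than $P(\nu m)$, and that the total bundle count is controlled by $\OD{\Wmap(M)}$ exactly as in the Mulders--Storjohann analysis rather than by the naive $\maxdeg$. Since the excerpt explicitly notes that the formal correctness and the weighted analysis are already carried out in \cite{alekhnovich05} and \cite{brander10}, I would lean on those references for these technical verifications and present only the recurrence-and-summation skeleton above as the proof.
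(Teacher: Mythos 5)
Your proposal follows essentially the same route as the paper: the paper's proof is literally ``insert into Brander's Theorem 3.14'' --- which encapsulates exactly the recurrence $T(t)=2T(t/2)+O(\ell^3P(\cdot))$ over the tree of bundles that you spell out --- with $t=\OD{\Wmap(M)}\leq\nu m$ supplied by Lemma \ref{lem:Mortho}, and then a refinement of the row-reduction count. Two points in your write-up are loose, though. First, the sentence claiming ``the total number of bundles---hence the tree depth---is $O(\log(\nu m))$'' conflates the two: the number of bundles is $t=O(\nu m)$ (one per unit decrease of $\maxdeg V$, bounded via the orthogonality defect), and only the tree \emph{depth} is $O(\log t)$; your later recurrence uses the correct reading, so this is a slip rather than a fatal error, but as stated it is false. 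Second, you obtain $P(m)$ rather than $P(\nu m)$ only by ``regarding $\nu$ as a constant,'' whereas the lemma makes no such assumption; the paper instead gets this from Brander's weighted analysis (his bound is already $O(\ell^3P(\nu^\mo t)\log t)$, reflecting the $x^\nu$-sparsity you correctly anticipate as the obstacle), and the genuinely new ingredient of the paper's proof --- which your sketch does not isolate --- is that Brander's generic row-reduction count $O(\ell t)=O(\ell\nu m)$ is replaced by the sharper $O(\ell m)$ from Lemma \ref{lem:storjohann}, which is what tightens both the $O(\ell^2 m)$ cost of the reductions themselves and the logarithmic factor to $\log m$. With those two repairs your argument coincides with the paper's.
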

\begin{proof}
  Inserting into \cite[Theorem 3.14]{brander10}, we get complexity $O(\ell^2t)$ for computing the row reductions, added with $O(\ell^3P(\nu^\mo t) \log(t))$ for all matrix multiplications, where $t = \deg(\Wmap(M)) - \deg(\Wmap(U))$ and $\Wmap(U)$ is the output.
  In our case, we set $t = \OD{\Wmap(M)} \leq \nu m$ to ensure $\Wmap(U)$ is in weak Popov form.
  However, Brander used in both estimates that the number of row reductions was bounded by $O(\ell t)$; since we showed in Lemma \ref{lem:storjohann} that it was indeed only $O(\ell m)$, we can compute the row reductions in only $O(\ell^2 m)$ and the matrix multiplications in $O(\ell^3P(m)\log m)$.
\end{proof}

\section{The Demand-Driven speed-up}
\label{sec:bma}

\def\curdeg{\theta}
\def\p#1{\tilde #1}
\def\prev{\mathsf{previous}}

We will show how to obtain a faster variant of \mbox{Algorithm \ref{alg:storjohann}} using the following observation: it is essentially sufficient to keep track of only the first column of $V$ during the algorithm, and then calculate the other entries when the need arise.
The resulting algorithm bears a striking resemblance to the Berlekamp-Massey for MLFSR \cite{schmidt06}, though of course the manner in which these algorithms are obtained differs.

Overload $\val$ to $\NN_0 \times \{0,\ldots,\ell\} \rightarrow \NN_0$ by $\val(\curdeg,i) = (\ell+1)\curdeg + i$, i.e. for any non-zero $\vec v \in \F[x]^{\ell+1}$, $\val(\vec v) = \val(\deg \vec v, \LP{\vec v})$. 
Define the helper function
\begin{IEEEeqnarray*}{rCl.l}
\textstyle
\prev(\curdeg,i) &=& \arg\max_{\curdeg', i'}\{ \val(\curdeg', i') &\mid \val(\curdeg',i') < \val(\curdeg,i) 
                 \  \land\  \curdeg' \equiv w_{i'} \mod \nu \}
\end{IEEEeqnarray*}
$\prev$ gives the degree and leading position a vector in $\Wmap(\Mod M)$ should have for attaining the greatest possible $\val$-value less than $\val(\curdeg,i)$.

\begin{algorithm}
  \caption{Demand--Driven MgLFSR Minimisation}
  \label{alg:bma}
  \begin{algorithmic}[1]
    \def\continue{\mathsf{continue}}
    \Require{$\p S_i =x^{w_i}S_i(x^\nu),\ \p G_i = x^{w_i}G_i(x^\nu)$ for $i=1,\ldots,\ell$\;}
    \Ensure{$\Lambda(x)$, a minimal solution to the MgLFSR\;}
    \State $(\curdeg, i) = (\deg, \textnormal{\footnotesize LP})$ of $(x^{w_0}, \p S_1, \ldots, \p S_\ell)$
    \Ifline{$i = 0$}{\Return $1$}
    \State $(\lambda_0,\ldots,\lambda_\ell) = (x^{w_0},0,\ldots,0)$
    \State $\alpha_jx^{\curdeg_j} = $ the leading monomial of $\p G_j$ for $j=1,\ldots,\ell$
    \While{$\deg \lambda_0 \leq \curdeg$}
      \State $\alpha = $ coefficient to $x^\curdeg$ in $(x^{-w_0}\lambda_0\p S_i \mod \p G_i)$
         \label{alg:bma:mono_ass}
      \If{$\alpha \neq 0$}
        \Ifline{$\curdeg < \curdeg_i$}{swap $(\lambda_0,\alpha,\curdeg)$ and $(\lambda_i, \alpha_i, \curdeg_i)$}
        \State $\lambda_0 = \lambda_0 - \frac \alpha {\alpha_i} x^{\curdeg-\curdeg_i} \lambda_i$
          \label{alg:bma:upd_lambda}
      \EndIf
      \State $(\curdeg, i) = \prev(\curdeg, i)$ \label{alg:bma:prev}
      \Ifline{$i = 0$}{$(\curdeg, i) = \prev(\curdeg, i)$}
    \EndWhile
    \State \Return $x^{-w_0}\lambda_0(x^{1/\nu})$
  \end{algorithmic}
\end{algorithm}

We will prove the correctness of the algorithm by showing that the computations correspond to a possible run of a slight variant of Algorithm \ref{alg:storjohann}; first we need a technical lemma:
\begin{lemma}
  \label{lem:storjohannVar}
  Consider a variant of Algorithm \ref{alg:storjohann} where we, when replacing some $\vec v_j$ with $\vec v'_j$ in a row reduction, instead replace it with $\vec v''_j = (v'_{j,0}, v'_{j,1} \modop \p G_1, \ldots, v'_{j,\ell} \modop \p G_\ell)$.
  This does not change correctness of the algorithm or the upper bound on the number of row reductions performed.
\end{lemma}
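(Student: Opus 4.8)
The plan is to establish three things: every row the variant produces stays in $\Wmap(\Mod M)$, the value potential $\val$ still strictly drops at every step, and—this is the crux—the matrix remains an actual \emph{basis} of $\Wmap(\Mod M)$ rather than silently collapsing to a proper submodule. The first is easy. Reducing coordinate $i$ replaces $v'_{j,i}$ by $v'_{j,i}-q_i\p G_i$, that is, it subtracts $q_i\Wmap(\vec m_i)$, where $\Wmap(\vec m_i)$ is the pure-modulus row of $\Wmap(M)$ (carrying $\p G_i$ in position $i$ and $0$ elsewhere). Since every coordinate of a vector in $\Wmap(\Mod M)$ lies in $x^{w_i}\F[x^\nu]$, the same degree-residue observation used in the proof of Lemma~\ref{lem:storjohann} shows the quotient $q_i$ is a polynomial in $x^\nu$; as $\Wmap(\vec m_i)\in\Wmap(\Mod M)$ and the preceding row reduction already keeps $\vec v'_j$ in the module, we get $\vec v''_j\in\Wmap(\Mod M)$.

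Second, I would check that the potential argument of Lemma~\ref{lem:storjohann} survives untouched. A modular reduction only lowers the degree of the affected coordinate, so $\deg\vec v''_j\le\deg\vec v'_j$; and when these are equal, the set of coordinates attaining the maximal degree can only lose members (coordinate $0$ is never touched and no coordinate ever grows), so $\LP{\vec v''_j}\le\LP{\vec v'_j}$. Hence $\val(\vec v''_j)\le\val(\vec v'_j)<\val(\vec v_j)$ by Lemma~\ref{lem:rowreddec}, i.e. the edited row still strictly decreases in value. Termination follows, and—once the basis property below is granted—the bound \eqref{eqn:storjohann:rowreds} is obtained verbatim, since its derivation only uses that the output has orthogonality defect $0$ over the same module.

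The hard part is the basis property. A genuine row reduction is unimodular and preserves the $\F[x^\nu]$-row span, but subtracting $q_i\Wmap(\vec m_i)$ from the working row $\vec v_j$ need not: writing $\Wmap(\vec m_i)=\sum_k a_k\vec v_k$ in the current basis and passing to the rank-one quotient by the span of the rows other than $\vec v_j$, the edited row becomes $(1-q_ia_j)\overline{\vec v_j}$, so the span is preserved \emph{iff} $1-q_ia_j$ is a unit of $\F[x^\nu]$. It is easy to exhibit (non-reachable) bases where $1-q_ia_j=0$, whereupon the rows become dependent, a later step manufactures a zero row, and the algorithm can neither reach a full-rank weak Popov form nor certify minimality through Lemma~\ref{lem:weakGivesMin}. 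The clean sufficient condition is the self-coefficient $a_j=0$, i.e. that $\Wmap(\vec m_i)$ lies in the span of the rows \emph{other} than $\vec v_j$, for then $1-q_ia_j=1$ and the modular reduction is literally an elementary operation against those rows.

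I therefore expect the main work, and the main obstacle, to be proving as a loop invariant that whenever the variant reduces coordinate $i$ of the working row, $\Wmap(\vec m_i)$ is available as an $\F[x^\nu]$-combination of the non-working rows—intuitively, the pure-modulus direction $i$ is never ``used up'' by the very row being reduced. Granting this invariant, the net effect of a row reduction followed by the modular reductions is a composition of unimodular operations, so $\det V$ is preserved up to a unit, the row span stays equal to $\Wmap(\Mod M)$, and full rank (hence genuine weak Popov termination) is maintained. I would prove the invariant by induction over the algorithm's row reductions, tracking how the leading-position structure forces $a_j$ to vanish; this inductive step is where the real difficulty lies, while correctness and the unchanged reduction count then follow by combining it with the first two paragraphs.
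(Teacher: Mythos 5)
Your setup is right through the second paragraph (membership in $\Wmap(\Mod M)$, the quotients lying in $\F[x^\nu]$, and $\val(\vec v''_j)\le\val(\vec v'_j)<\val(\vec v_j)$ all match the paper), and you have correctly isolated the real issue: the modulo reduction is only an honest unimodular operation if $\vec g_h=\Wmap(\vec m_h)$ can be expressed over the rows \emph{other} than the working row. But the proof stops exactly there: the invariant you propose to establish by induction --- that $\Wmap(\vec m_i)$ is \emph{always} in the span of the non-working rows whenever coordinate $i$ of the working row is reduced --- is left unproven, and it is in fact false in general. Already at the first step the only conflict may be between row $0$ and row $h=\LP{\Wmap(\vec m_0)}$, and if row $h$ (which \emph{is} $\vec g_h$) has the larger degree it becomes the working row; since the rows form a basis, $\vec g_h$ is certainly not in the span of the others, so your sufficient condition $a_j=0$ fails and no induction will rescue it.

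The missing idea is a case split on whether the working row is one of the rows ``carrying'' $\vec g_h$. The paper maintains, for each position $h$, a set $J_h$ of current rows (initially $\{\vec g_h\}$) with two properties: $\vec g_h$ is an $\F[x^\nu]$-combination of the rows in $J_h$, and every $\vec v\in J_h$ has $\val(\vec v)\le\val(\vec g_h)$. If the working row is \emph{not} in $J_h$, the $h$th modulo reduction is realised by adding multiples of rows in $J_h$, which is unimodular --- this is your good case. If the working row $\vec v$ \emph{is} in $J_h$, one does not need to justify the modulo reduction at all, because it is void: the reduced row $\vec v'$ has $\val(\vec v')<\val(\vec v)\le\val(\vec g_h)$, which forces $\deg v'_h<\deg\p G_h$, so $v'_h\modop\p G_h=v'_h$. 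One then updates $J_h:=J_h\setminus\{\vec v\}\cup\{\vec v',\vec v_k\}$ (where $\vec v_k$ is the row used in the reduction, with $\val(\vec v_k)\le\val(\vec v)$), preserving both properties. This second branch --- ``if the pure-modulus direction has been used up by the working row, the reduction does nothing anyway'' --- is the content you would need to add; without it the argument does not close.
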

  \begin{proof}
    Correctness follows if we can show that each of the $\ell$ modulo reductions could have been achieved by a series of $\F[x^{\nu}]$ row operations on the current matrix $V$ after the row reduction producing $\vec v'$, since then $V$ would remain a basis of $\Wmap(\Mod M)$.

    Consider the modulo reduction on the $h$th position.
    This could be achieved by adding a multiple of $\vec g_h = (0,\ldots,0,\p G_h,\ldots,0)$, with position $h$ non-zero, to $\vec v'$.
    That this multiple is in $\F[x^\nu]$ follows from the fact that any $\vec u \in \Wmap(\Mod M)$ has $\deg u_h \equiv w_h \mod \nu$.
    Since $\vec g_h$  is a row in $\Wmap(M)$, then as long as this has not yet been row reduced, the $h$th position reduction is allowed.
    Notice that if $\val(\vec v') < \val(\vec g_h)$ then the reduction using $\vec g_h$ is void.

    Introduce now a loop invariant involving $J_h = \{ \vec g_h \}$, a subset of the current rows in $V$ having two properties: that $\vec g_h$ can be constructed as an $\F[x^{\nu}]$-linear combination of the rows in $J_h$; and that each $\vec v \in J_h$ has $ \val(\vec v) \leq \val(\vec g_h)$.
    After row reductions on rows not in $J_h$, the $h$th modulo reduction is therefore allowed, since $\vec g_h$ can be constructed by the rows in $J_h$.
    On the other hand, after a row reduction on a row $\vec v \in J_h$ by some $\vec v_k$ resulting in $\vec v'$, the $h$th modulo reduction has no effect since $\val(\vec v') < \val(\vec v) \leq \val(\vec g_h)$.
    Afterwards, $J_h$ is updated as $J_h = J_h \setminus \{ \vec v \} \cup \{ \vec v', \vec v_k \}$ and the loop invariant is kept since $\val(\vec v_k) \leq \val(\vec v)$.

    Since $\val(\vec v''_j) \leq \val(\vec v'_j)$ the proof of Lemma \ref{lem:storjohann} shows that the number of row reductions performed is not worse than in Algorithm \ref{alg:storjohann}.
  \end{proof}
\begin{lemma}
  Algorithm \ref{alg:bma} is correct.
\end{lemma}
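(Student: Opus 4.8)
The plan is to prove correctness by exhibiting Algorithm~\ref{alg:bma} as a bookkeeping-reduced simulation of the modulo-reducing variant of Algorithm~\ref{alg:storjohann} established in Lemma~\ref{lem:storjohannVar}, in which only the first column of the working matrix is stored explicitly. I would set up the correspondence so that at every iteration the simulated matrix carries $\ell+1$ rows of $\Wmap(\Mod M)$: a distinguished \emph{current} row $\vec v$ whose zeroth coordinate is the program variable $\lambda_0$, and for each $h\in\{1,\dots,\ell\}$ a \emph{stored} row of leading position $h$, recorded only through the triple $(\lambda_h,\curdeg_h,\alpha_h)$ of its zeroth coordinate, its degree, and its leading coefficient. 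Lines~1--4 initialise this configuration to $\Wmap(M)$: $\vec v$ is the top row $(x^{w_0},\p S_1,\dots,\p S_\ell)$ and the stored rows are the generators $\vec g_h=(0,\dots,\p G_h,\dots,0)$.

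The first thing to record is that reading coordinate $i$ ``on demand'' is legitimate. Since every $\vec u\in\Wmap(\Mod M)$ satisfies $u_h\equiv x^{-w_0}u_0\,\p S_h \modop \p G_h$ for $h\ge 1$ --- this is just the defining congruence of $\Mod M$ transported through $\Wmap$ --- the reduced representative of coordinate $h$ is determined by coordinate $0$. I would therefore carry as a loop invariant that the current row is kept reduced, i.e. $v_h = x^{-w_0}\lambda_0\,\p S_h \modop \p G_h$ for all $h\ge 1$, so that line~\ref{alg:bma:mono_ass} returns exactly the coefficient of $x^\curdeg$ in $v_i$. Maintaining this reduced form across a row reduction is precisely what the variant of Lemma~\ref{lem:storjohannVar} buys us: a reduction adds an $\F[x^\nu]$-multiple of a stored row to $\vec v$, after which the implicit modulo reductions of the variant restore $\deg v_h<\deg\p G_h$ without leaving $\Wmap(\Mod M)$, hence restore the determined form. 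I would note here that a still-initial generator $\vec g_h$ does not itself obey this relation, but it is never the fixed row of a no-swap reduction, since the current row's $i$th coordinate has degree below $\deg\p G_h=\curdeg_h$ and so forces the test $\curdeg<\curdeg_i$; it only ever enters a reduction through the swap that turns it into the current row, after which the reduction immediately re-establishes the invariant.

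Next I would track the scan. The second invariant is $\val(\vec v)\le\val(\curdeg,i)$ at the top of each iteration, where $\prev$ walks the admissible pairs $(\curdeg',i')$ with $\curdeg'\equiv w_{i'}\modop\nu$ --- exactly the pairs at which a vector of $\Wmap(\Mod M)$ can be nonzero --- in strictly decreasing $\val$-order. Under the reduced-form invariant the scalar $\alpha$ of line~\ref{alg:bma:mono_ass} is nonzero if and only if $\val(\vec v)=\val(\curdeg,i)$, i.e. exactly when $(\curdeg,i)$ is the leading term of $\vec v$, so no leading term is ever skipped. When $\alpha\ne 0$ a row reduction against the stored row of leading position $i$ is both legal and required; the comparison $\curdeg<\curdeg_i$ detects when the stored row has the larger degree, and the swap ensures that in every case it is the higher-degree row that gets replaced, matching Definition~\ref{def:rowred}. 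Lemma~\ref{lem:rowreddec} then gives $\val(\vec v')<\val(\curdeg,i)$, and advancing $(\curdeg,i)$ by $\prev$ in line~\ref{alg:bma:prev} re-establishes the bound; when $\alpha=0$ the matrix and $\vec v$ are untouched and the bound is preserved trivially. The skip of $i=0$ reflects that $\vec v$ is the unique row earmarked for leading position $0$, so no stored row competes there.

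Finally I would settle termination and output. Each reduction strictly decreases $\val(\vec v)$, and as in Lemma~\ref{lem:storjohann} the number of reductions is finite, so the loop halts; tracing the residue bookkeeping shows the guard $\deg\lambda_0>\curdeg$ first holds exactly when $\vec v$ has reached leading position $0$ and no admissible pair of larger $\val$ survives --- equivalently, when the simulated matrix, whose $\ell+1$ rows carry the distinct leading positions $0,1,\dots,\ell$, is in weak Popov form. By Lemma~\ref{lem:weakGivesMin} the current row is then of minimal degree among leading-position-$0$ vectors, so by Lemma~\ref{lem:minSolWeighted} it equals $\Wmap(\vec s)$ for a minimal solution $\vec s$, and line~14 returns $\Wmap^\mo$ of its zeroth coordinate, namely $\Lambda$ (the degenerate case in which the top row already has leading position $0$ being caught by the early return in line~2). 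I expect the main obstacle to be the simulation correspondence itself --- showing that the purely scalar test-and-update of Algorithm~\ref{alg:bma} faithfully reproduces a legal run of the Lemma~\ref{lem:storjohannVar} variant --- and within it the reduced-form invariant, since it is what makes the on-demand leading term genuine and simultaneously depends on the variant's modulo reductions for its own preservation.
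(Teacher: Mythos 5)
Your proposal is correct and follows essentially the same route as the paper: simulating the Lemma~\ref{lem:storjohannVar} variant of Algorithm~\ref{alg:storjohann} via the invariants that $(\lambda_0,\dots,\lambda_\ell)$ is the first column, that $(\alpha_j,\curdeg_j)$ records the leading monomials, that $\val(\vec v_0)\le\val(\curdeg,i)$, and the key relation $v_h = x^{-w_0}v_0\,\p S_h \bmod \p G_h$, concluding via Lemma~\ref{lem:minSolWeighted}. Your explicit treatment of the still-unreduced generator rows $\vec g_h$ is a point the paper's proof glosses over, but it does not change the argument.
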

  \begin{proof}
    Let $V$ be the matrix continually changing in Lemma \ref{lem:storjohannVar}'s variant of Algorithm \ref{alg:storjohann}.  Let us say for a matrix $U$ that there is a ``conflict on $(i,j)$'' if $\LP{\vec u_i} = \LP{\vec u_j}$ and $\deg \vec u_i \leq \deg \vec u_j$, i.e. one could perform a row reduction on $\vec u_i, \vec u_j$.  Observe that initially $V = \Wmap(M)$ has exactly one conflict, and that after every row reduction, either there is only one conflict and it involves the replaced row, or there are zero conflicts and the algorithm is finished.  Thus for notational convenience, we consider a further variant of Algorithm \ref{alg:storjohann} where we possibly swap the two rows after a row reduction such that the reduced is the zeroth row afterwards.  Note furthermore that initially $\LP{\vec v_i} = i$ for $i\geq 1$, and that the above swapping strategy would keep also this invariant during Algorithm \ref{alg:storjohann}.

    We will demonstrate the following additional invariants:
    \begin{enumerate}
      \item Each iteration of Algorithm \ref{alg:bma} where lines 7--8 are run correspond to one row reduction on $V$;
      \item $(\lambda_0,\ldots,\lambda_\ell)$ will correspond to the first column of $V$; \label{lem:inv:lambdas}
      \item $\alpha_j x^{\curdeg_j}$ will be the leading monomial of $\LT{\vec v_j}$ for $j\geq 1$;\label{lem:inv:leads}
      \item $\val(\vec v_0) \leq \val(\curdeg, i)$ \label{lem:inv:deg}
    \end{enumerate}
    These invariants are clearly true after initialisation; assume now they are true on entry to the loop body, and we will show they are true on exit.  Once Algorithm \ref{alg:bma} terminates, $\vec v_0$ will have $\LP{\vec v_0} = 0$, so by Lemma \ref{lem:minSolWeighted}, $\lambda_0$ will be a minimal solution to the MgLFSR.

    In Lemma \ref{lem:storjohannVar}'s variant of Algorithm \ref{alg:storjohann}, note that for any row $\vec v$ of $V$, $v_j = (x^{-w_0}v_0\p S_j \modop \p G_j)$ for $j \geq 1$.  Thus, in Line \ref{alg:bma:mono_ass}, $\alpha x^\curdeg$ will be the leading monomial of $v_{0,i}$ if and only if $\val(\vec v_0) = \val(\curdeg, i)$, and otherwise, due to Invariant \ref{lem:inv:deg}, $\alpha = 0$.  In the latter case, we simply update $(\curdeg, i)$ to reflect our improved knowledge, keeping the invariants.

    However, if $\alpha \neq 0$, there is a conflict on $(0,i)$.
    Since there are no other conflicts, Algorithm \ref{alg:storjohann} will perform the next row reduction on this.
    We perform a swap such that the row to be updated, i.e.~the one with greatest degree, is the zeroth; this corresponds to line 7 (if their degrees are equal, there is a choice on which of rows $0$ or $i$ to reduce, and we choose $0$).  Note that Invariants \ref{lem:inv:leads}--\ref{lem:inv:deg} are still true after the possible swap.

    Algorithm \ref{alg:storjohann} will update $\vec v_0$ exactly as $\vec v'_0 = \vec v_0 - \frac \alpha {\alpha_i} \vec v_i$, due to the above and Invariant \ref{lem:inv:leads}.  Modulo reductions are possibly applied afterwards.  Thus the update in Line \ref{alg:bma:upd_lambda} ensures that $(\lambda_0,\ldots,\lambda_\ell)$ is the first column of $V$ after the row reduction.
  \end{proof}
For complexity estimates, define $\tilde P(t)$ as the complexity of calculating Line \ref{alg:bma:mono_ass} and Line \ref{alg:bma:upd_lambda}, with $t$ being the maximal degree of the in-going polynomials.
We could calculate Line \ref{alg:bma:mono_ass} as a polynomial multiplication followed by a division, so at least $\tilde P(t) \subset O(P(t))$.
However, sometimes we can do better: if all $G_i(x)$ are powers of $x$, the modulo reduction in Line \ref{alg:bma:mono_ass} is free, and we can perform the remaining computation in only $O(t)$.
In general, if the number of non-zero monomials of each $G_i(x)$ is upper-bounded by some constant, the computation can be done in $O(t)$.

\begin{lemma}
  Algorithm \ref{alg:bma} has complexity $O(\ell m \tilde P(m))$.
\end{lemma}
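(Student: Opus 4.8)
The plan is to bound the running time as (number of iterations of the \textbf{while}-loop) times (cost of a single iteration), and to show these are $O(\ell m)$ and $O(\tilde P(m))$ respectively. The second factor is essentially the definition of $\tilde P$; the first is where the real work lies, because the swap on the $\alpha\neq 0$ branch can push the loop variable $\curdeg$ \emph{upward}.

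For the cost of one iteration I would reuse the degree bound from the proof of Lemma~\ref{lem:storjohann}: no polynomial occurring during the run exceeds $\maxdeg(\Wmap(M)) = \nu m$, and since every such polynomial lies in $\F[x^\nu]$ it is stored as a dense polynomial of degree $m$. Hence evaluating Line~\ref{alg:bma:mono_ass} and Line~\ref{alg:bma:upd_lambda} costs $\tilde P(m)$ each by the very definition of $\tilde P$, while the scalar comparison, the swap, and one evaluation of $\prev$ are dominated by this. So each iteration costs $O(\tilde P(m))$.

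The core is the bound $O(\ell m)$ on the number of iterations. Each pass replaces $(\curdeg,i)$ by $\prev(\curdeg,i)$, the next smaller \emph{admissible} pair (one with $\curdeg\equiv w_i \bmod \nu$), and for $\curdeg\in[0,\nu m]$ there are only $O(\ell m)$ admissible pairs, so a single monotone descent through them would already give the claim. The obstruction is that when $\alpha\neq 0$ and $\curdeg<\curdeg_i$ the swap raises $\curdeg$ to the old pivot degree $\curdeg_i$, forcing a re-descent through pairs already visited. I would therefore track the running minimum $\mu$ of $\curdeg$ seen so far and classify each iteration as \emph{forward} (its $\curdeg$ lies strictly below $\mu$, so it lowers $\mu$) or \emph{wasteful} (its $\curdeg\ge\mu$). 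A forward iteration lowers $\mu$ by one admissible step and visits each admissible pair at most once, so the forward iterations number at most $O(\ell m)$.

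The delicate point, and the step I expect to be the main obstacle, is bounding the wasteful iterations. The key observation is that the guard $\deg\lambda_0\le\curdeg$ caps each post-swap re-descent at the degree $d$ to which the reduction on Line~\ref{alg:bma:upd_lambda} brings the active row, so the re-covered stretch spans only $\curdeg\in[d,\curdeg_i]$, contributing $O\!\big(\tfrac{\ell+1}{\nu}(\curdeg_i-d)\big)$ admissible pairs. Crucially, that swap-plus-reduction lowers $\deg V=\sum_j\deg\vec v_j$ by exactly $\curdeg_i-d$. Summing over all swaps and using that the total decrease of $\deg V$ over the run equals $\OD{\Wmap(M)}$ (the final matrix being in weak Popov form, Lemma~\ref{lem_popovreduces}) with $\OD{\Wmap(M)}\le \nu m-w_0$ by Lemma~\ref{lem:Mortho}, the wasteful iterations are bounded by $\tfrac{\ell+1}{\nu}\OD{\Wmap(M)}=O(\ell m)$. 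Thus the loop runs $O(\ell m)$ times at cost $O(\tilde P(m))$ each, giving the claimed $O(\ell m\,\tilde P(m))$. The care needed is precisely in charging each re-descent to the $\deg V$-drop it accompanies rather than to the (possibly much larger) gap $\curdeg_i-\mu$, and in absorbing the boundary contributions of degree-preserving swaps; establishing that these lower-order terms do not accumulate is the part I would expect to demand the most attention, whereas the degree bookkeeping behind the per-iteration estimate is routine.
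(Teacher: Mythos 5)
Your top-level decomposition (number of loop iterations times cost per iteration, with the latter being $\tilde P(m)$ essentially by definition and the former to be shown $O(\ell m)$) is the same as the paper's, and your per-iteration cost argument matches. The divergence is in the iteration count. The paper's argument is a one-line potential argument: by Invariant 4 each pass of the loop lowers, by one admissible step, the $\val$-upper-bound attached to \emph{some} row of $V$ (a swap only permutes which bound belongs to which row, so the sum of the bounds over all rows still drops by one step per iteration), and the total possible decrease is already bounded by \eqref{eqn:storjohann:rowreds} as $\tfrac{\ell+1}{\nu}\OD{\Wmap(M)}+2\ell+1=O(\ell m)$.

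Your forward/wasteful split with charging to drops of $\deg V$ does not close as stated. First, the re-descent after a swap ends at the value of the newly reduced row, whose degree $d$ need not lie below the pre-swap $\theta$: the reduction cancels the leading term at position $i$, but another coordinate of the old $\vec v_i$ may still carry degree close to $\theta_i$. So the stretch you must pay for is governed by the value drop of that particular reduction, not by the running minimum $\mu$. Second, and more seriously, the ``boundary contributions'' you defer are not lower-order under a degree-based accounting: a degree-preserving reduction ($d=\theta_i$, leading position moving left at constant degree) drops $\deg V$ by zero yet can cost up to $\ell$ iterations of descent, and with $\Theta(\ell m)$ reductions possible this could a priori contribute $\Theta(\ell^2 m)$ iterations, which would degrade the claimed complexity by a factor $\ell$. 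The cure is exactly to measure progress in admissible steps of $\val$ rather than in degrees of $\deg V$: the descent between consecutive reductions covers precisely the admissible values between the reduced row's old and new $\val$, these intervals are disjoint per row, and their total length is the quantity bounded in \eqref{eqn:storjohann:rowreds}. So the idea is sound, but the potential must be the value function, not the degree.
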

\begin{proof}
  Each iteration through the loop costs at most $\tilde P(m)$ since all polynomials are sparse of degree at most $\nu m$.
  Due to Invariant \ref{lem:inv:deg} and Line \ref{alg:bma:prev}, each iteration decreases the upper bound on one of $V$'s row's value.
  We counted in \eqref{eqn:storjohann:rowreds} that this can done at most $O(\ell m)$ times.
\end{proof}


\section{Power-Gao Decoding GRS codes}
\label{sec:decoding}
\def\C{\mathcal C}

Schmidt et al. demonstrated how one can decode low-rate Generalised Reed-Solomon (GRS) codes beyond half the minimum distance by solving an MLFSR \cite{schmidt06rs}.
This ``Power decoding'' works by noting that the classical Key Equation can be extended to several ones.
The resulting MLFSR problem could of course be solved using the algorithms of this paper.

We will briefly present a similar decoding strategy which instead extends what one could call the Key Equation of Gao's decoding algorithm \cite{gao02}.
It should be noted that this algorithm could also be used for decoding Interleaved GRS codes, just as the one by Schmidt et al. \cite{schmidt09,wachter12irs}.

Let $\C = \{ \big(f(\alpha_0),\ldots, f(\alpha_{n-1})\big) \mid f \in \F[x], \deg f < k \}$ be a (simple) $\params n k {d=n-k+1}$ GRS code with evaluation points $\alpha_0,\ldots,\alpha_{n-1} \in \F$.
Consider a sent codeword $\vec c \in \C$ which comes from evaluating some $f(x)$.
Let $\vec c$ be subjected to an unknown error pattern $\vec e \in \F^n$ such that $\vec r = \vec c + \vec e$ is received.
Define the (unknown) \emph{error locator} as $\Lambda(x) = \prod_{e_j \neq 0}(x-\alpha_j)$.
Define now also the known $G(x) = \prod_{j=0}^{n-1}(x-\alpha_j)$ as well as $R(x)$ by $R(\alpha_j) = r_j$ for $j=0,\ldots,n-1$.
\begin{lemma}
  \label{lem:gao}
  $\Lambda(x)R^i(x) \equiv \Lambda(x)f^i(x) \mod G(x),  i \in \ZZ_+$
\end{lemma}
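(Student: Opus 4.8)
The plan is to reduce the polynomial congruence to a pointwise statement at the evaluation points. Since $G(x)=\prod_{j=0}^{n-1}(x-\alpha_j)$ is a product of \emph{distinct} linear factors (the evaluation points of a GRS code are pairwise distinct), any two polynomials $A,B\in\F[x]$ satisfy $A\equiv B \pmod{G}$ if and only if $A(\alpha_j)=B(\alpha_j)$ for every $j\in\{0,\ldots,n-1\}$: indeed $G\mid (A-B)$ precisely when each factor $(x-\alpha_j)$ divides $A-B$, i.e.\ when $A-B$ vanishes at all the $\alpha_j$. Hence it suffices to verify that $\Lambda(\alpha_j)R^i(\alpha_j)=\Lambda(\alpha_j)f^i(\alpha_j)$ for each $j$.

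I would then fix an index $j$ and split into two cases according to whether the error coordinate $e_j$ vanishes. By definition $R(\alpha_j)=r_j=c_j+e_j=f(\alpha_j)+e_j$. If $e_j=0$, then $R(\alpha_j)=f(\alpha_j)$, so $R^i(\alpha_j)=f^i(\alpha_j)$ and the two sides coincide regardless of the value of $\Lambda(\alpha_j)$. If instead $e_j\neq 0$, then $(x-\alpha_j)$ is one of the factors of $\Lambda(x)=\prod_{e_j\neq 0}(x-\alpha_j)$, so $\Lambda(\alpha_j)=0$ and both sides vanish. In either case the required equality holds, so $\Lambda R^i$ and $\Lambda f^i$ agree at all $n$ points $\alpha_j$ and are therefore congruent modulo $G$. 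Since the argument uses nothing about the exponent beyond $R^i(\alpha_j)=r_j^{\,i}$ and $f^i(\alpha_j)=f(\alpha_j)^i$, it holds uniformly for every $i\in\ZZ_+$.

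There is essentially no hard step: the content is the complementary-support observation that the error locator vanishes at exactly those coordinates where $R$ and $f$ disagree, while elsewhere $R$ and $f$ already coincide. The only point needing a little care is the evaluation characterisation of congruence modulo $G$, which rests on the $\alpha_j$ being distinct; this is guaranteed by the GRS construction, so I do not expect any genuine obstacle.
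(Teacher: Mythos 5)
Your proof is correct and follows essentially the same route as the paper's: both reduce the congruence modulo $G$ to equality of evaluations at the distinct points $\alpha_j$, then split into the cases $e_j\neq 0$ (where $\Lambda(\alpha_j)=0$ kills both sides) and $e_j=0$ (where $R(\alpha_j)=f(\alpha_j)$). You merely spell out the evaluation characterisation of congruence in more detail than the paper does.
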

\begin{proof}
  Polynomials are equivalent modulo $G(x)$ if and only if they have the same evaluation at $\alpha_0,\ldots,\alpha_{n-1}$.
  For $\alpha_i$ where $e_i \neq 0$, both sides of the above evaluate to zero, while for the remaining $\alpha_i$ they both give $\Lambda(\alpha_j)r_j^i = \Lambda(\alpha_j)c_j^i$.
\end{proof}
This leads us to consider the following MgLFSR: choose some $\ell \in \ZZ_+$.
Let $G_i = G$ and $S_i = (R^i \modop G)$, as well as $\nu = 1, w_0 = \ell(k-1)+1$ and $w_i = (\ell-i)(k-1)$ for $i=1,\ldots,\ell$.
The vector $\vec s = (\Lambda, \Lambda f, \ldots, \Lambda f^\ell)$ will then be a solution to the MgLFSR.

To find $\vec s$ with the algorithms of this paper, we need it also to be minimal, and that any other minimal solution is a constant multiple of $\vec s$.
We can estimate an upper bound on the degree of a minimal solution $(\lambda,\omega_1,\ldots,\omega_\ell)$ as follows:
since $\lambda S_i \equiv \omega_i \mod G_i$ implies that there exists $p_i \in \F[x]$ with $\deg p_i = \deg(\lambda S_i)$ such that $\lambda S_i - p_iG_i = \omega_i$, we can consider the MgLFSR as a homogeneous linear system of equations in the coefficients of $\lambda, p_1,\ldots,p_\ell$, such that $\lambda S_i - p_iG_i$ should have coefficient 0 for $x^{\lfloor \deg \Lambda - \nu^\mo(w_i-w_0) \rfloor}, \ldots, x^{\deg(\Lambda S_i)}$.
This linear system has non-zero solutions whenever $\deg \lambda \geq \frac \ell {\ell+1} n - \frac \ell {\ell+1} - \half \ell(k-1)$.
Thus, whenever the error locator $\Lambda$ has degree at least this, we cannot hope that $\vec s$ is a minimal solution.
For fewer errors than the above, we need a deeper analysis to estimate the probability that $\vec s$ is the minimal solution; such an analysis is done for the original Power decoding \cite{schmidt06rs}, where they find the same upper bound for error correction.

Using Algorithm \ref{alg:storjohann}, we could solve this MgLFSR in $O(\ell^2n^2)$, while Alekhnovich's algorithm could do it in $O(\ell^3n\log^2 n\log\log n)$.
Algorithm \ref{alg:bma} would be $O(\ell n^2\log n\log\log n)$.
One of the two latter will be fastest, but it will depend on the relation between $n$ and $\ell$.
Note that the pre-processing of calculating $R$ and $G$ can be done in $O(n\log^2 n \log\log n)$, see e.g.~ \cite[p. 235]{gathen}.

\section{Conclusion}

We have introduced the generalisation MgLFSR of the well-studied problem of synthesising shift-registers with multiple sequences, and shown how this can be modeled as that of finding ``minimal'' vectors in certain $\F[x]$ modules. 
There are off-the-shelf algorithms in the literature for solving this, and we demonstrated how a particularly simple of those---the Mulders--Storjohann algorithm \cite{mulders03}---runs faster on MgLFSR instances than on general $\F[x]$-matrices.

We then described how this algorithm is amenable to two speed-ups: firstly, a D\&C-approach leads to the known Alekhnovich's algorithm \cite{alekhnovich05} which we also showed has better than generic running time for MgLFSRs.
Secondly, by observing that for MgLFSRs we can postpone calculations in a demand-driven manner, we reach an algorithm resembling the Berlekamp-Massey for MLFSRs \cite{fengTzengEA,schmidt06}.

The two presented variants are as fast as the best existing algorithms for the usual MLFSR, but they are more flexible and have easy proofs of correctness due to the algebraic foundations from module minimisation.
The two speed-ups, unfortunately, seem incompatible.

The utility of the MgLFSR generalisation was demonstrated by a new decoding algorithm for GRS codes: a variant of the ``Power decoding'' approach by Schmidt et al. \cite{schmidt06rs}.
Though this did not need the generalisation of the $\nu$-weight, that was included with the outlook of decoding Algebraic Geometric codes, inspired by the approach of Brander \cite{brander10}.

\section*{Acknowledgments}
The author would like to thank Peter Beelen for valuable advice on this work and its exposition.
The author gratefully acknowledges the support from the Otto M\o nsted's Fund and the Idella Fund, as well as the support from the Danish National Research Foundation and the National Science Foundation of China (Grant No.11061130539) for the Danish-Chinese Center for Applications of Algebraic Geometry in Coding Theory and Cryptography.

\bibliographystyle{IEEEtran}
\bibliography{../../tex/bibtex_old}

\end{document}